\newtheorem*{thm}{Theorem}
\newtheorem{defn}{Definition}
\newtheorem{prop}{Proposition}
\definecolor{orange}{rgb}{1,0.5,0}
\begin{document}
\title[description of Toda hierarchy using cocycles]{On a description of the Toda hierarchy using cocycle maps}
\author{Darren C. Ong}
\address{Department of Mathematics,\\ Xiamen University Malaysia,\\ Jalan
Sunsuria, Bandar Sunsuria,\\ 43900 Sepang, Selangor Darul Ehsan,\\ Malaysia}
\email{darrenong@xmu.edu.my}
\urladdr{https://dongcl.wixsite.com/darrenong}

\maketitle
\begin{abstract}The Toda hierarchy refers to a family of integrable flows on Jacobi operators that have many applications in mathematics and physics.  We demonstrate carefully that an alternative characterization of the Toda hierarchy using cocycle maps is equivalent to the traditional approach using Lax pairs.
\end{abstract}
\begin{section}{Introduction}
A Jacobi operator $J$ is a self-adjoint operator from $\ell^2(\mathbb Z)$ to $\ell^2(\mathbb Z)$. It is typically written as the tri-diagonal matrix,

\[
\begin{pmatrix}
\ddots&&&&\\
&b_1&a_1&&&\\
&a_1&b_2&a_2&&\\
&&a_2&b_3&a_3&\\
&&&a_3&b_4&\\
&&&&&\ddots
\end{pmatrix},
\]
where the $a_j$ are positive, the $b_j$ are real, and the index $j$ runs through $\mathbb Z$. We typically also insist that the $a_j$ and $b_j$ are uniformly bounded.

Alternatively, with $\{u_n\}\in\mathbb \ell^2(\mathbb Z)$ we may think of the Jacobi operator as  the following symmetric difference expression:

\[
J:u_n\to a_n u_{n+1}+a_{n-1}u_{n-1}+b_n u_n.
\]

Let $\mathcal J$ be the set of all Jacobi operators with operator norm less than or equal to $2$. We want to introduce a time-dependence on the Jacobi operator. To this end, we set an initial condition $J\in\mathcal J$ and define a differential $\mathbb R$-group action $\odot$ such that

\[t\odot J\in\mathcal J.\]
 
In particular, we will define $t\odot J$ to be a \emph{Toda flow}. This is a particular type of evolution on Jacobi operators that has many applications straddling both sides of the boundary between physics and mathematics. It was initially developed to describe the motion of particles in a nonlinear one-dimensional crystal with nearest neighbor interaction. From the perspective of inverse spectral theory, the Toda flow is important because it is an evolution on the Jacobi operators that preserves the spectrum. The Toda flow can also be regarded as a Jacobi operator analogue of the Korteweg-de Vries flow which relates to the evolution of waves in shallow media. 

The traditional mathematical treatment of the Toda flow defines it via the Lax equation:

\begin{equation}\label{LaxEquation}
\dfrac{d}{dt} (t\odot J)=P(t)(t\odot J)-(t\odot J)P(t),
\end{equation}
where $P(t)$ is a finite operator on $\ell^2(\mathbb Z)$ that satisfies $(P(t))^{tr}=-P(t)$. There are several appropriate choices for $P(t)$, each generating a different evolution of the Jacobi operators. In fact, it makes sense to parametrize the possible solutions $-P(t)$ by real polynomials with constant coefficient $1$,
\[\{1+c_1z+c_2z^2+\ldots c_r z^r\vert r\in\mathbb Z_+, c_j\in \mathbb C\}.\]
The \emph{Toda hierarchy} refers to all possible flows $t\odot J$ generated in this way. 

The purpose of this paper is to focus on an alternative characterization of the Toda hierarchy. We demonstrate carefully that this new definition is equivalent to the traditional one. This paper may be viewed as a comapanion result to \cite{Remling17} which discusses at greater length this alternative perspective for the Toda flow and a generalized flow on canonical systems. 

To motivate the alternative characterization, let us first discuss the role of the Weyl-Titchmarsh $m$-functions of the Jacobi operator. Each Jacobi operator $J$ corresponds to a pair of Herglotz functions (analytic functions from the upper half plane to itself), which we label as $m_+(z)$ and $m_-(z)$. There are a few different variants of these $m$-functions,  but we will use the definition

\[
m_\pm(z)=\mp \frac{u_1^\pm(z)}{a_0 u_0^\pm(z)},
\]
where $u_+$ is square-summable at $+\infty$, $u_-$ is square-summable at $-\infty$ and $u_\pm(z)$ are solutions to the difference equation $Ju=zu$ that are not necessarily in $\ell^2(\mathbb Z)$ (we refer to $u$ in this case as a \emph{generalized eigenfunction}).
	
These Weyl-Titchmarsh $m$-functions are of great importance in spectral theory. Their limiting behavior on $\mathbb R$ gives us information on the spectral measures of the Jacobi operator. For example, the pure points of the spectral measure are associated with points $x$ on $\mathbb R$ for which the limit of $m_\pm(x+i\epsilon)$ tends to a pole as $\epsilon$ tends to $0$.

In particular, it is often useful to treat the Toda flow as an evolution on the Weyl-Titchmarsh $m$-functions, rather than as an evolution on the Jacobi operator itself. See  \cite{Remling-Generalized}, \cite{Remling17} and \cite{Ong-Remling} for examples of this approach. Given a Toda flow, we thus write $m_\pm(z,t)$ as the $m$-functions corresponding to $t\odot J$.

Let us consider an $\mathrm{SL}(2,\mathbb C)$ matrix action on a complex number (treating the complex number as an element of complex projective space) as a linear fractional transformation,
that is

\[
\begin{pmatrix}
a&b\\
c& d
\end{pmatrix}z:=\frac{az+b}{cz+d}.
\]

There exists an $\mathrm{SL}(2,\mathbb C)$ matrix $T(z, J)$ such that

\begin{equation}\label{Tm_-}
m_-(z,t)=T(z,t\odot J)m_-(z,0),
\end{equation}
and 
\begin{equation}\label{Tm_+} m_+(z,t)=\begin{pmatrix}
1&0\\
0& -1
\end{pmatrix}T(z,t\odot J)
\begin{pmatrix}
1&0\\
0& -1
\end{pmatrix}
m_+(z,0).
\end{equation}

This just follows from the transfer matrix formalism for solutions of the Jacobi difference equation, and for a thorough treatment we refer the reader to \cite{Gesztesy-Holden-discrete} and \cite{Teschl-Jacobi}.

By theorem 2.2 of \cite{Remling17}, we can find a $T$ that is a cocyle, that is, it obeys for any $s,t,\in\mathbb R $,

\begin{equation}\label{cocycle}
T(s+t,J)=T(s,t\odot J)T(t,J).
\end{equation}

Recall that the one-step transfer matrix for a Jacobi operator takes the form

\begin{equation}\label{shiftmatrix}
M(J)=\begin{pmatrix}
\frac{z-b_1}{a_1}& \frac{1}{a_1}\\
-a_1 & 0
\end{pmatrix}
\end{equation}

Let $\{u_n\}$ be any generalized eigenvector of the Jacobi recursion. Then the transfer matrix $M$ satisfies

\[ M(J)\begin{pmatrix}
-u_1\\
a_0u_0
\end{pmatrix}=
\begin{pmatrix}
-u_2\\
a_1u_1
\end{pmatrix}.
\]

It is well known that the Toda flow and the left shift commute. In terms of $T$, this fact is expressed as the relation
\begin{equation}\label{shiftflowcommute}
M(t\odot J)T(t,J)=T(t,SJ)M(J),
\end{equation}
where $SJ$ refers to our Jacobi operator $J$ shifted once to the left. In $SJ$ we take $J$ and have every term $a_n, b_n$ replaced with $a_{n+1},b_{n+1}$ for all $n\in\mathbb Z$. More precisely, if $L$ is the operator that takes the vector $\{u_n\}_{n\in\mathbb Z}$ to $\{u_{n+1}\}_{n\in\mathbb Z}$, $SJ$ is shorthand for the operator $L^*JL$.

Given a $\mathrm{SL}(2,\mathbb C)$ cocycle $T$ that satisfies \eqref{cocycle}, it is known that there exists a trace zero matrix $B(J)$ such that

\[
\dfrac{d}{dt} T(t,J)=B(t\odot J)T(t,J).
\]

It turns out that for the Toda flow, the resulting $B$ matrix will always have entries that are continuous with respect to $J$ given the operator norm on $\mathcal J$, and polynomial in $z$. Our main theorem is that these properties, along with the commutativity of the Toda flow with the left shift on entries of the Jacobi operator, characterize the Toda hierarchy. 
\begin{thm}\label{maintheorem}
Let $t\odot J$ be a differentiable $\mathbb R$-group action on $\mathcal J$. Let $T(t,J)$ be an $\mathrm{SL}(2,\mathbb C)$ cocycle with respect to this group action $\odot$. Suppose also that $T(t,J)$ satisfies \eqref{shiftflowcommute}. This $T(t,J)$ in turn corresponds to a trace zero $2\times 2$ matrix $B(J)$. 

Suppose that the entries of the $B(J)$ matrix are polynomials in $z$ and continuous with respect to $J$. In fact, we take the perspective that our choice of such $B(J)$ determines $T(t,J)$ and hence the group action $t\odot J$. Then $t\odot J$ is the Toda flow with initial condition $J$ corresponding to the polynomial $p_dz^d+\ldots+p_1z+1$, with the $p_j$s determined by $B(J)$. In particular, by varying our choice of $B(J)$ we can generate every member of the Toda hierarchy in this way.
\end{thm}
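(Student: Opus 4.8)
The plan is to convert every hypothesis into a single \emph{discrete zero-curvature equation} and then read the flow off from it. First I would pin down the infinitesimal data: putting $s=t=0$ in the cocycle relation \eqref{cocycle} gives $T(0,J)=T(0,J)^2$, so $\det T=1$ forces $T(0,J)=I$, and differentiating $\frac{d}{dt}T(t,J)=B(t\odot J)T(t,J)$ at $t=0$ (using $0\odot J=J$) yields $\frac{d}{dt}\big|_{t=0}T(t,J)=B(J)$. Differentiating the shift--flow commutation \eqref{shiftflowcommute} at $t=0$ and inserting this then produces
\[
\dot M(J)=B(SJ)M(J)-M(J)B(J),\qquad \dot M(J):=\frac{d}{dt}\Big|_{t=0}M(t\odot J).
\]
Because the transfer matrix \eqref{shiftmatrix} depends only on the site-$1$ coefficients $a_1,b_1$, the left-hand side encodes precisely the velocities $\dot a_1,\dot b_1$ of the flow at $J$; replacing $J$ by its shifts $S^{n-1}J$ and invoking the shift-covariance built into \eqref{shiftflowcommute} recovers the full vector field $(\dot a_n,\dot b_n)$ on $\mathcal J$.

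Next I would exploit the polynomial structure in $z$. Writing $M(J)=M_0(J)+zM_1(J)$ with $M_1=\begin{pmatrix}1/a_1&0\\0&0\end{pmatrix}$, and $B(J)=\sum_{k=0}^{d}B_k(J)z^k$ with each $B_k$ trace zero and continuous in $J$, the right-hand side of the zero-curvature equation becomes a polynomial of degree $d+1$ in $z$ while the left-hand side is affine in $z$. Matching coefficients gives a recursion that I would solve from the top down: the $z^{d+1}$ relation and the intermediate $z^{k}$ relations ($k\ge 2$) are homogeneous in $\dot M$ and force each $B_k$ to be a fixed universal expression in finitely many neighboring entries $a_j,b_j$ — exactly the coefficients of the Toda-hierarchy generator — while the $z^{1}$ and $z^{0}$ relations express $\dot a_1,\dot b_1$ through $B_0,B_1$. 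Assembling the resulting $2\times 2$ identities over all sites into a single operator statement, I would identify the flow as a solution of the Lax equation \eqref{LaxEquation} in which $-P$ is the member of the family parametrized, as in the Introduction, by the real polynomial $g(z)=p_dz^d+\cdots+p_1z+1$, the coefficients $p_j$ being read directly off $B$.

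Finally, for the converse I would run the construction backwards: given any such polynomial $g$, the standard transfer-matrix formalism of \cite{Teschl-Jacobi} and \cite{Gesztesy-Holden-discrete} furnishes a cocycle $T$ satisfying \eqref{cocycle} and \eqref{shiftflowcommute} whose generator $B$ has entries polynomial in $z$ and continuous in $J$ and whose induced flow is the Toda flow attached to $g$; uniqueness of solutions to $\frac{d}{dt}T=B(t\odot J)T$ then shows that $B$ determines $T$ and hence $\odot$, so every member of the hierarchy is realized. The step I expect to be the main obstacle is the rigidity in the second paragraph: proving that mere \emph{continuity} in $J$, together with polynomiality in $z$ and the zero-curvature equation, leaves no freedom beyond the hierarchy — that the homogeneous recursion relations admit only the Toda solutions and that the a priori only continuous $B_k$ are in fact forced to be the universal polynomial expressions in the entries. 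A secondary subtlety is fixing the constant term of $g$ to $1$, which reflects the gauge freedom that a scalar multiple of the identity inside $g(J)$ is diagonal and hence contributes nothing to the skew part $P$, so it may be normalized without changing the flow.
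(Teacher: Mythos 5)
Your skeleton coincides with the paper's: derive the zero-curvature identity $\dot M(J)=B(SJ)M(J)-M(J)B(J)$ (Proposition \ref{prop:zerocurvature}), expand everything in powers of $z$, and finally match the resulting $B$ against the known generator of the Toda hierarchy (the matrix $C_r$ of \cite[(12.93)]{Teschl-Jacobi}). But your middle step --- the claim that the top-down coefficient recursion ``forces each $B_k$ to be a fixed universal expression in finitely many neighboring entries $a_j,b_j$'' --- is asserted rather than proved, and you yourself flag it as the expected main obstacle. That is precisely where the paper's real content lies. Coefficient matching only yields a coupled system (the paper's \eqref{D-aC} through \eqref{eq3^j=0}) relating the entries of $B$ at $J$ and at $SJ$; to extract structure from it the paper must introduce two a priori \emph{different} coefficient families, $p_j$ built from the $C$-entries via \eqref{p_d} and $q_j$ built from the $A$-entries via \eqref{q_d}, and then prove by a lengthy induction (Proposition \ref{mainprop}), resting on the matrix-element identities \eqref{<J>}, \eqref{<SJ>1}, \eqref{<SJ>2}, that $p_j=q_j$ and that both are shift-invariant; a separate computation (Proposition \ref{prop:unfinished}) is then needed to kill the leftover terms in the $(1,1)$ entry. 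None of this is routine verification, and none of it appears in your proposal.

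The second, more conceptual omission is \emph{how} continuity in $J$ enters. The algebra alone gives only shift-invariance of the $p_j$: they could still be arbitrary shift-invariant functions of $J$ (functions of spectral invariants, say), in which case the flow would follow a Toda direction whose polynomial varies with the operator, and the conclusion that $t\odot J$ is a single member of the hierarchy would fail. The paper closes this gap by a specific mechanism absent from your sketch: choose a Jacobi operator $J'\in\mathcal J$ whose shift orbit is dense, and use continuity of $B$ to propagate the values $p_j(J')$ to all of $\mathcal J$, upgrading shift-invariance to genuine constancy (Proposition \ref{J-independence}). Without this density argument (or a substitute) your continuity hypothesis is never actually used, so the ``rigidity'' you hope for cannot be established. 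Your final paragraph --- realizing every member of the hierarchy and normalizing the constant coefficient to $1$ --- is consistent with the paper's concluding comparison using \cite[(12.58), (12.83)]{Teschl-Jacobi}, but it too depends on first having the explicit form \eqref{B(J)final} of $B(J)$, which only the missing induction provides.
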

\begin{paragraph}{\textbf{Remarks.}} \begin{enumerate}[(i)]
\item We only assume in this theorem that $T$ is an $\mathrm{SL}(2,\mathbb C)$ cocycle (i.e., it obeys \eqref{cocycle}) that commutes with the left shift. We do not assume that $T$ obeys \eqref{Tm_-} and \eqref{Tm_+}. Rather, the fact that $T$ evolves the $m$-functions emerges an a consequence of our hypotheses on $B(J)$. More precisely, our hypotheses imply that we get a member of the Toda hierarchy, and we know that those flows evolve the $m$-functions correctly. This is a rather surprising result, as even though the conditions we place on $B(J)$ are strong, they do not seem immediately related to the fact that $T$ takes $m$-functions to $m$-functions.
\item As we alluded to earlier, it makes sense to write the possible solutions to \eqref{LaxEquation} in correspondence with real polynomials
 
\[
1+p_1z+p_2z^2+\ldots p_nz^n.
\] 
In our theorem, we can determine the $p_j$ from $B(J)$, using a recursion relation whose initial condition involves the top right entry of $B(J)$.
\item A somewhat analogous result for a different integrable hierarchy appears in Section 2.2 of \cite{GVY}.
\end{enumerate}
\end{paragraph}

We have thus introduced a way to define the Toda hierarchy in terms of continuity conditions on $B(J)$, rather than in terms of solutions of the Lax equation \eqref{LaxEquation}. This definition in terms of $B(J)$ is more natural if we are focused on the effect of the Toda flow on the Weyl-Titchmarsh $m$-functions.

We note that it is well-known that the Toda flow satisfies \eqref{shiftflowcommute}. This fact appears in for instance as (0.46) and (0.47) in \cite{Gesztesy-Holden-discrete}, and as (12.86) in \cite{Teschl-Jacobi} (later our Proposition \ref{prop:zerocurvature} will make this connection more clear). However, these traditional treatments of the Toda flow mention \eqref{shiftflowcommute} as a consequence of the Lax equation \eqref{LaxEquation}, whereas we start with \eqref{shiftflowcommute} and derive the Toda flow from it. 

In particular, \cite{Remling17} and \cite{Ong-Remling} demonstrate the usefulness of this perspective The paper \cite{Remling17} works out a Toda-type flow for \emph{canonical systems} (\cite{deBranges}). Canonical systems are a spectral problem that generalizes the Jacobi equation. Not every pair of Herglotz functions correspond to the $m$-functions of a Jacobi operator, but every pair of Herglotz functions correspond to the $m$-functions of a canonical system. In essence, canonical systems are a spectral problem centered on the $m$-functions, rather than on operators.  A Toda-type flow thus is more sensibly expressed as an action on the $m$-functions rather than in terms of a Lax-type equation on a matrix operator. As another example, \cite{Ong-Remling} uses this cocycle map point of view to develop a generalization of the Toda flow where each flow corresponds to a $C^2$ or a $C^\infty$ function rather than just a polynomial.

\begin{paragraph}{\textbf{Acknowledgements}} I wish to thank Christian Remling for helpful conversations. I was supported by a Xiamen University Malaysia Research Fund (Grant No: XMUMRF/2018-C1/IMAT/0001)
\end{paragraph}
\end{section}
\begin{section}{Preliminaries}
\begin{subsection}{The Toda hierarchy in terms of solutions of the Lax equation}

This subsection is a brief summary of the relevant background material about the Toda hiearchy. Please consult Chapter 12 of \cite{Teschl-Jacobi}, for a more extensive treatment.

We begin with Jacobi parameters that depend on time:

\[
a(t),b(t)\in\ell^\infty (\mathbb Z,\mathbb R), a(n,t)\neq 0, (n,t)\in \mathbb Z\times\mathbb R.
\]
Furthermore, we assume that the map 

\[t\in\mathbb R\to (a(t),b(t))\in\ell^\infty (\mathbb Z)\oplus \ell^\infty(\mathbb Z)
\]
is differentiable.

The Jacobi operator is defined as a map $J(t):\ell^2(\mathbb Z)\to\ell^2(\mathbb Z)$ that takes $f_n$ to
\begin{equation}\label{JacobiDiffEqDef}
a_n(t)f_{n+1}+a_{n-1}(t)f_{n-1}+b_n(t)f_n.
\end{equation}

Recall the Lax equation \eqref{LaxEquation}, and recall that $P$ is a finite operator on $\ell^2(\mathbb Z)$ such that the transpose of $P$ is $-P$. 

According to Theorem 12.2 of \cite{Teschl-Jacobi}, $P(t)$ must be expressible in the form

\[P(t)=\sum_{j=0}^r \left(
c_{r-j} \widetilde{P_j}(t)+d_{r-j}J(t)^{j+1}\right)+d_{r+1}\mathrm{id},\]
where $\widetilde {P_j}(t)$ is obtained by subtracting the lower triangular part of $J(t)^{j+1}$ from the upper triangular part of $J(t)^{j+1}$. For example,

\[
\widetilde{P_0}(t)=
\begin{pmatrix}
\ddots&&&&\\
&0&a_1&&&\\
&-a_1&0&a_2&&\\
&&-a_2&0&a_3&\\
&&&-a_3&0&\\
&&&&&\ddots
\end{pmatrix}.
\]

For simplicity's sake, let us assume all the $d_j=0$, since those terms just get immediately cancelled out in the right hand side of \eqref{LaxEquation}. Thus, in a manner of speaking the $P(t)$ (and hence the possible Toda flows $J(t)$) are parametrized by the set of real polynomials with constant coefficient $1$,

\[\{1+c_1z+c_2z^2+\ldots c_r z^r\vert r\in\mathbb Z_+, c_j\in \mathbb C\}.\]
This set of flows is known as the Toda hierarchy. Each member of the Toda hierarchy corresponds to a $P(t)$ that solves the Lax equation \eqref{LaxEquation} 

\end{subsection}

\begin{subsection}{The Toda flow as an $\mathbb R$-group action on the $m$-functions}
Let $\mathcal J$ be the set of Jacobi operators whose norm is less than or equal to $2$. Consider a differentiable $\mathbb R$-group action $\odot$ on $\mathcal J$. Let $T$ be a map from $(\mathbb R,\mathcal J)$ to $\mathbb{SL}(2,\mathbb C)$.

\begin{defn}We say that $T$ is a \emph{cocycle} of this group action $\odot$ if it satisfies \eqref{cocycle}.
\end{defn}
\begin{prop}\label{T'=BT}
For $T$ a cocycle of $\odot$ and $J\in\mathcal J$ we have
\[ \dfrac{d}{dt} T(t,J)=B(t\odot J)T(t,J),\]
where $B$ is a map from $\mathcal J$ to the space of $2\times 2$ complex matrices. In particular, $B$ depends on $t\odot J$, and not individually on $t$ alone or on $J$ alone.
\end{prop}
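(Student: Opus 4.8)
The plan is to build the matrix $B$ explicitly as the infinitesimal generator of the cocycle at time zero, and then to obtain the claimed differential equation by differentiating the cocycle identity \eqref{cocycle} in one of its two time variables. Concretely, for each $K\in\mathcal J$ I would define
\[ B(K):=\frac{\partial}{\partial s}\Big|_{s=0} T(s,K), \]
which is a well-defined $2\times 2$ complex matrix because $T$ is assumed differentiable in its first argument. Observe that $B$ is by construction a function of the operator $K$ alone; this is precisely the structural reason the final formula will depend only on $t\odot J$ and not separately on $t$ or on $J$, which is the content of the last sentence of the proposition.

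First I would record the normalization $T(0,J)=\mathrm{id}$. Setting $s=t=0$ in \eqref{cocycle} and using the group-action identity $0\odot J=J$ gives $T(0,J)=T(0,J)^2$; since $T$ takes values in $\mathrm{SL}(2,\mathbb C)$ it is invertible, which forces $T(0,J)=\mathrm{id}$. This justifies reading $B(K)$ as a derivative based at the identity and will make the differentiation below well-posed.

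The main computation is to rewrite \eqref{cocycle} in the form $T(t+s,J)=T(s,t\odot J)\,T(t,J)$, fix $t$ and $J$, and differentiate both sides with respect to $s$ at $s=0$. The left-hand side yields $\frac{\partial}{\partial u}T(u,J)\big|_{u=t}=\frac{d}{dt}T(t,J)$, while the right-hand side yields $\big(\frac{\partial}{\partial s}|_{s=0}T(s,t\odot J)\big)\,T(t,J)=B(t\odot J)\,T(t,J)$, since the factor $T(t,J)$ is constant in $s$. Equating the two sides gives exactly $\frac{d}{dt}T(t,J)=B(t\odot J)\,T(t,J)$, with the dependence of $B$ on $t\odot J$ alone already built into its definition.

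The main obstacle here is regularity rather than algebra: I must ensure that the partial derivative defining $B(K)$ exists for every $K\in\mathcal J$, and that differentiating the product on the right-hand side of the cocycle identity is legitimate. Both are secured once $T$ is differentiable in its first argument (implicit in the statement, since $\frac{d}{dt}T(t,J)$ is written), and the product rule applies cleanly because the second factor does not depend on $s$. As a free by-product, differentiating the identity $\det T(s,K)\equiv 1$ at $s=0$ via Jacobi's formula shows $\mathrm{tr}\,B(K)=0$, which matches the trace-zero condition used in the main theorem, although the proposition as stated does not require it.
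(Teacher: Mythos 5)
Your proof is correct and takes essentially the same approach as the paper: both differentiate the cocycle identity \eqref{cocycle} at time zero, so that $B$ arises as the infinitesimal generator $\left.\frac{\partial}{\partial s}T(s,K)\right|_{s=0}$, which is manifestly a function of the operator $K=t\odot J$ alone. The only cosmetic difference is that you first establish the normalization $T(0,J)=\mathrm{id}$, whereas the paper carries the factor $T^{-1}(0,\cdot)$ along and defines $B$ through the identity $\left(\frac{d}{dt}T(s,J)\right)T^{-1}(s,J)=\left(\frac{d}{dt}T(0,s\odot J)\right)T^{-1}(0,s\odot J)$.
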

\begin{proof}
We start by differentiating \eqref{cocycle} with respect to $t$:
\begin{align*}
\dfrac{d}{dt}T(t+s,J)=&\left(\dfrac{d}{dt}T(t,s\odot J)\right)T(s,J)\\
\left(\dfrac{d}{dt}T(t+s,J)\right)T^{-1}(t+s,J)=&\left(\dfrac{d}{dt}T(t,s\odot J)\right)T(s,J)T^{-1}(s,J)T^{-1}(t,s\odot J)\\
=&\left(\dfrac{d}{dt}T(t,s\odot J)\right)T^{-1}(t,s\odot J).\\
\end{align*}
Now setting $t=0$ for this last equality, we have

\[B(s\odot J):=\left(\dfrac{d}{dt}T(s,J)\right)T^{-1}(s,J)=\left(\dfrac{d}{dt}T(0,s\odot J)\right)T^{-1}(0,s\odot J).\]

In other words, the matrix $B(s\odot J)$ indeed depends on the action of $s$ on $J$, but not on $s$ alone or $J$ alone.
\end{proof}

Let us also assume \eqref{shiftflowcommute}, that $T$ commutes with the left shift. We can then prove

\begin{prop}\label{prop:zerocurvature}
\[\dfrac{d}{dt}M(J)=B(SJ)M(J)-M(J)B(J)\]
\end{prop}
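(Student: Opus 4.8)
The plan is to derive the identity by differentiating the shift-flow commutation relation \eqref{shiftflowcommute} in $t$ and then evaluating at $t=0$. Here I read the left-hand side $\frac{d}{dt}M(J)$ as the derivative of the one-step transfer matrix along the flow, that is, $\frac{d}{dt}\big|_{t=0}M(t\odot J)$; this derivative exists because the group action is differentiable and $M$ is a fixed rational expression in the entries $a_1,b_1$ with $a_1\neq 0$. I will also use the elementary fact that $T(0,J)=I$ for every $J$, which follows from the cocycle identity \eqref{cocycle} upon setting $s=t=0$.

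First I would apply the product rule to both sides of $M(t\odot J)T(t,J)=T(t,SJ)M(J)$. The left-hand side yields $\big(\frac{d}{dt}M(t\odot J)\big)T(t,J)+M(t\odot J)\frac{d}{dt}T(t,J)$, while the right-hand side yields simply $\big(\frac{d}{dt}T(t,SJ)\big)M(J)$, since $M(J)$ carries no $t$-dependence.

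Next I would invoke Proposition \ref{T'=BT} to rewrite the two cocycle derivatives. Applied with initial datum $J$ it gives $\frac{d}{dt}T(t,J)=B(t\odot J)T(t,J)$, and applied with initial datum $SJ$ it gives $\frac{d}{dt}T(t,SJ)=B(t\odot SJ)T(t,SJ)$. Substituting these and evaluating at $t=0$ --- where $T(0,J)=T(0,SJ)=I$, $\,0\odot J=J$, and $0\odot SJ=SJ$ --- collapses the equation to
\[\frac{d}{dt}\Big|_{t=0}M(t\odot J)+M(J)B(J)=B(SJ)M(J),\]
and rearranging produces the claimed formula. The fully time-dependent statement, with $J$ replaced by $t\odot J$ throughout, then follows from the group law $s\odot(t\odot J)=(s+t)\odot J$.

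Since the computation is short, the only real work is bookkeeping. The point I expect to require the most care is applying Proposition \ref{T'=BT} to the two distinct initial conditions $J$ and $SJ$ correctly, so that the matrix on the right emerges as $B(SJ)$ rather than $B(J)$ after setting $t=0$. Beyond correctly interpreting the left-hand side and recording the vanishing of the $T$-factors at $t=0$, I do not anticipate a genuine obstacle.
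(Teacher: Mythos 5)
Your proof is correct and follows essentially the same route as the paper: differentiate \eqref{shiftflowcommute} by the product rule, evaluate at $t=0$ where $T(0,J)=T(0,SJ)=I$, and identify the cocycle derivatives via Proposition \ref{T'=BT} as $B(J)$ and $B(SJ)$. Your added observations (deducing $T(0,J)=I$ from \eqref{cocycle}, and reading $\frac{d}{dt}M(J)$ as $\frac{d}{dt}\big|_{t=0}M(t\odot J)$) are consistent with the paper's implicit conventions.
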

\begin{proof}We differentiate both sides of \eqref{shiftflowcommute} with respect to $t$ to get
\[ \left(\dfrac{d}{dt}M(t\odot J)\right)T(t,J)+M(t\odot J)\dfrac{d}{dt}T(t,J)=\left(\dfrac{d}{dt}T(t,SJ)\right)M(J).\]
Plugging in $t=0$, this becomes
\[ \left(\dfrac{d}{dt}M(J)\right)+M(J)\dfrac{d}{dt}T(0,J)=\left(\dfrac{d}{dt}T(0,SJ)\right)M(J).\]
Note that by the definition of $B$, 

\begin{equation}\label{eq:B=T'}
B(J)=\dfrac{d}{dt}T(0,J) \text{ and }  B(SJ)=\dfrac{d}{dt}T(0,SJ).
\end{equation} This concludes our proof of the proposition.
\end{proof}
\begin{prop}$B$ has trace zero.
\end{prop}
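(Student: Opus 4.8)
The plan is to show that $B(J)$ has trace zero by exploiting the constraint that $T(t,J)$ takes values in $\mathrm{SL}(2,\mathbb C)$, i.e. $\det T(t,J)=1$ for all $t$ and $J$. The key analytic tool will be Jacobi's formula (Liouville's formula) relating the derivative of a determinant to the trace of the product of the matrix's inverse with its derivative. Since the determinant is identically $1$, its derivative vanishes, and this will force a trace to vanish as well.

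Concretely, first I would fix $J\in\mathcal J$ and consider the scalar function $t\mapsto\det T(t,J)$, which is constant equal to $1$ because $T$ is an $\mathrm{SL}(2,\mathbb C)$ cocycle. Differentiating and applying Jacobi's formula gives
\[
0=\dfrac{d}{dt}\det T(t,J)=\det T(t,J)\,\mathrm{tr}\!\left(T^{-1}(t,J)\dfrac{d}{dt}T(t,J)\right)=\mathrm{tr}\!\left(T^{-1}(t,J)\dfrac{d}{dt}T(t,J)\right).
\]
The next step is to recognize the matrix inside the trace. By Proposition \ref{T'=BT} we have $\frac{d}{dt}T(t,J)=B(t\odot J)T(t,J)$, so that
\[
T^{-1}(t,J)\dfrac{d}{dt}T(t,J)=T^{-1}(t,J)B(t\odot J)T(t,J).
\]
Since the trace is invariant under conjugation, $\mathrm{tr}\big(T^{-1}(t,J)B(t\odot J)T(t,J)\big)=\mathrm{tr}\,B(t\odot J)$, and therefore $\mathrm{tr}\,B(t\odot J)=0$ for every $t$ and $J$.

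Finally I would specialize to $t=0$. Using \eqref{eq:B=T'}, or simply the fact that the identity $\mathrm{tr}\,B(t\odot J)=0$ holds for all $t$ and all $J\in\mathcal J$ while $t\odot J$ ranges over $\mathcal J$, I conclude that $\mathrm{tr}\,B(J)=0$ for every $J\in\mathcal J$, which is exactly the claim that $B$ has trace zero. I do not expect any genuine obstacle here; the only point requiring a modicum of care is justifying the use of Jacobi's formula, which relies on the differentiability of $T(t,J)$ in $t$ (guaranteed by the hypothesis that $\odot$ is a differentiable group action and that $T$ is a cocycle thereof, as already used in Proposition \ref{T'=BT}). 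This is a short computation rather than a conceptually difficult argument.
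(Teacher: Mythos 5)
Your proof is correct and follows essentially the same route as the paper: both differentiate the identity $\det T(t,J)=1$ and combine this with the relation $\frac{d}{dt}T=B(t\odot J)T$ from Proposition \ref{T'=BT}. The only cosmetic difference is that the paper evaluates at $t=0$, where $T(0,J)$ is the identity so the trace of $\frac{d}{dt}T$ is directly the derivative of the determinant, whereas you work at general $t$ and invoke conjugation invariance of the trace before specializing.
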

\begin{proof}
 Firstly, since $T(t,J)$ is an $\mathrm{SL}(2,\mathbb C)$ matrix, $\dfrac{d}{dt}\det(T(t,J))=0$. Secondly, $T(0,J)$ is the identity matrix. These two facts imply that the trace of $\dfrac{d}{dt}T(t,J)$ is zero when $t=0$. The proposition then follows from \eqref{eq:B=T'}. 
\end{proof}

For a Jacobi operator $J$, let us write the entries of $B(J)$ as follows:

\begin{equation}\label{eq:ACD-A}
\begin{pmatrix}
A(z,J)& C(z,J)\\
D(z,J)& -A(z,J)
\end{pmatrix}.
\end{equation}

A calculation using \eqref{shiftmatrix} and Proposition \ref{prop:zerocurvature} gets us

\begin{align}
D(z,SJ)=&-a_1^2C(z,J),\label{MasterEq1}\\
A(z,SJ)+A(z,J)=&-a_1'/a_1+(z-b_1)C(z,J),\label{MasterEq2}\\
\frac{z-b_1}{a_1}\left[A(z,SJ)-A(z,J)\right]=& \frac{-b_1'a_1-(z-b_1)a_1'}{a_1^2}+a_1C(z,SJ)+\frac{D(z,J)}{a_1}.\label{MasterEq3}
\end{align}

\end{subsection}
\end{section}

\begin{section}{Proof of the main theorem}
 We take the equations \eqref{MasterEq1}, \eqref{MasterEq2}, \eqref{MasterEq3}. We assume that $A(z,J), C(z,J), D(z,J)$ are all polynomials of degree $d$, that is
 
 \[A(z,J)=\sum_{j=0}^d A_j(J)z^j, C(z,J)=\sum_{j=0}^d C_j(J)z^j, D(z,J)=\sum_{j=0}^d D_j(J)z^j.\]

 From \eqref{MasterEq1} we have
 
 \begin{equation}\label{D-aC}
 D_j(SJ)=-a_1^2C_j(J),
 \end{equation}
 for $j=0,\ldots, d$.
 
 From comparing the $z^{d+1}$ coefficients of \eqref{MasterEq2} we have
 
 \begin{equation}\label{C=0}
 C_d(J)=0.
\end{equation} 
 Furthermore, \eqref{D-aC} implies that 
\begin{equation}\label{D=0}
D_d(J)=0
\end{equation}
as well.
 
Comparing the $z^d, z^{d-1},\ldots, z$ coefficients of \eqref{MasterEq2} gives us
\begin{equation}\label{A+A=C-bC}
A_j(SJ)+A_j(J)=C_{j-1}(J)-b_1C_j(J),
\end{equation}
for $j=1,\ldots d$. Then, comparing the constant coefficients of \eqref{MasterEq2} we get

\begin{equation}\label{A+A=C-bC,j=0}
A_0(SJ)+A_0(J)=-\frac{a_1'}{a_1}-b_1C_0(J).
\end{equation}
 
We now turn our attention toward \eqref{MasterEq3}. From the $z^{d+1}$-coefficients, we observe that
$A_d(SJ)-A_d(J)=0$. Together with \eqref{A+A=C-bC} and \eqref{C=0} we find that

\begin{equation}\label{A=A=C}
A_d(SJ)=A_d(J)=\frac{C_{d-1}(J)}{2}.
\end{equation}
This implies that $A_d$, and hence $C_{d-1}$, must be independent of shifts on $J$.

We now look at the $z^d, z^{d-1},\ldots, z^2$ coefficients of \eqref{MasterEq3} and observe that

\begin{equation}\label{eq3^j=2..d}
-\frac{b_1}{a_1}(A_j(SJ)-A_j(J))+\frac{1}{a_1}(A_{j-1}(SJ)-A_{j-1}(J))=a_1C_j(SJ)+\frac{D_j(J)}{a_1}.
\end{equation}
  
 Comparing the $z^1$-coefficients of \eqref{MasterEq3} we have
 
 \begin{equation}\label{eq3^j=1}
-\frac{b_1}{a_1}(A_1(SJ)-A_1(J))+\frac{1}{a_1}(A_{0}(SJ)-A_{0}(J))=a_1C_1(SJ)+\frac{D_1(J)}{a_1}-\frac{a_1'}{a_1^2}.
\end{equation}

Lastly, comparing the constant coefficients of \eqref{MasterEq3} we have

 \begin{equation}\label{eq3^j=0}
-\frac{b_1}{a_1}(A_0(SJ)-A_0(J))=a_1C_0(SJ)+\frac{D_0(J)}{a_1}-\frac{b_1'}{a_1}+\frac{b_1a_1'}{a_1^2}.
\end{equation}

Let us define, for a matrix operator $M$ and an integer $n$
\begin{equation}
(M)_n=\left<\delta_n,M\delta_n\right>.
\end{equation}

Recall that $L$ is the left shift operator that takes the vector $\{u_n\}_{n\in\mathbb Z}$ to $\{u_{n+1}\}_{n\in\mathbb Z}$, and let $L^*$ be its adjoint, the right shift operator. Note that $L$ is related to $SJ$ in that $SJ=L^*JL$.

We can then define auxiliary functions
\begin{equation}
G^{(r)}_1=\sum_{j=0}^{r-1}z^{r-1-j}(J^j)_1,
\end{equation}
and
\begin{equation}
H^{(r)}_1=z^r-(J^r)_1+2a_1\sum_{j=1}^{r-1}z^{r-1-j}(LJ^j)_1.
\end{equation}
Their purpose will be made clear later.

Let us define $J$-dependent variables $p_1, p_2,\ldots, p_d$, and $q_1, q_2,\ldots, q_d$ and also

\begin{equation}
G_1=\sum_{j=1}^d p_j(J) G^{(j)}_1, H_1=\sum_{j=1}^d q_j(J) H^{(j)}_1,
\end{equation}

Alternatively, we may write $G_1$ as follows:

\begin{align}
\nonumber G_1=&z^{d-1}p_d(J)\\
\nonumber  &+z^{d-2}(p_d(J)\cdot (J)_1+p_{d-1}(J))\\
\nonumber &+z^{d-3}(p_d(J)\cdot(J^2)_1+p_{d-1}(J)\cdot(J)_1+p_{d-2}(J))\\
\nonumber  &+\ldots\\
\nonumber  &+z(p_d(J)\cdot(J^{d-2})_1+p_{d-1}(J)\cdot(J^{d-3})_1+\ldots+p_2(J))\\
\label{G_npolynomial}  &+p_d(J)\cdot(J^{d-1})_1+p_{d-1}(J)\cdot(J^{d-2})_1+\ldots+p_1(J)
\end{align}

Similarly we can write $H_1$ as

\begin{align}
\nonumber H_1=&z^{d}q_d(J)\\
\nonumber&+z^{d-1}q_{d-1}(J)\\
\nonumber&+z^{d-2}(q_{d-2}(J)+2a_1(SJ)_1q_d(J))\\
\nonumber&+z^{d-3}(q_{d-3}(J)+2a_1(LJ)_1q_{d-1}(J)+2a_1(LJ^2)_1q_{d}(J))\\
\nonumber&+\ldots\\
\nonumber&+z(q_1(J)+2a_1(LJ)_1q_3(J)+2a_1(LJ^2)_1q_4(J)+\ldots +2a_1(LJ^{d-2})_1q_{d}(J))\\
\nonumber&+2a_1((LJ)_1q_2(J)+(LJ^2)_1q_3(J)+\ldots +(LJ^{d-1})_1q_{d}(J))\\
&-(J)_1q_1(J)-(J^2)_1q_2(J)-\ldots-(J^d)_1q_d(J).
\label{H_npolynomial}
\end{align}

Let us define the $p_d,\ldots p_1$ recursively as follows:

\begin{align}
\label{p_d}p_{d}(J)=&C_{d-1}(J)/2,\\
\nonumber p_{d-1}(J)=&C_{d-2}(J)/2-p_d(J)\cdot(J)_1,\\
\nonumber p_{d-2}(J)=&C_{d-3}(J)/2-p_d(J)\cdot(J^2)_1-p_{d-1}(J)\cdot(J)_1,\\
\ldots\nonumber\\
\nonumber p_1(J)=&C_0(J)/2-p_d(J)\cdot(J^{d-1})_1-p_{d-1}(J)\cdot(J^{d-2})_1-\ldots-p_2(J)\cdot(J)_1
\end{align}

We also define $q_d,\ldots q_1$ as

\begin{align}
\nonumber q_d(J)=&-A_d(J)+C_{d-1}(J),\\
\nonumber q_{d-1}(J)=&-A_{d-1}(J)+C_{d-2}(J)-b_1C_{d-1}(J),\\
\nonumber q_{d-2}(J)=&-2a_1(LJ)_1q_d(J)-A_{d-2}(J)+C_{d-3}(J)-b_1C_{d-2}(J),\\
\nonumber q_{d-3}(J)=&-2a_1(LJ)_1q_{d-1}(J)-2a_1(LJ^2)_1q_{d}(J)\\
\nonumber&-A_{d-3}(J)+C_{d-4}(J)-b_1C_{d-3}(J),\\
\nonumber \ldots\\
\nonumber q_{1}(J)=&-2a_1(LJ)_1q_{3}(J)-2a_1(LJ^2)_1q_{4}(J)-\ldots -2a_1(LJ^{d-2})_1q_d(J)\\
\nonumber&-A_{1}(J)+C_{0}(J)-b_1C_{1}(J).\\ \label{q_d}
\end{align}

It is fairly easy to check that our definitions for $p_1,\ldots p_d$ imply
\begin{equation}\label{C=2Gequation}
C(z,J)=2G_1.
\end{equation} 

The definitions for $q_1,\ldots q_d$ then imply
\begin{align}
\nonumber A(z,J)=&(z-b_1)(2G_1)-H_1\\
\nonumber&+A_0(J)+b_1C_0(J)\\
\nonumber&+2a_1((LJ)_1q_2(J)+(LJ^2)_1q_3(J)+\ldots +(LJ^{d-1})_1q_{d}(J))\\
&-J_1q_1(J)-J^2_1q_2(J)-\ldots-J^d_1q_d(J).\label{A=G-Hequation}
\end{align}
The expression
\begin{align*}&A_0(J)+b_1C_0(J)
+2a_1((LJ)_1q_2(J)+(LJ^2)_1q_3(J)+\ldots +(LJ^{d-1})_1q_{d}(J))\\
&-J_1q_1(J)-J^2_1q_2(J)-\ldots-J^d_1q_d(J)
\end{align*}
is actually equal to zero, but we don't need that fact for now. We will perform this calculation at the end of the section.

Recalling \eqref{JacobiDiffEqDef} lets us write the Jacobi equation in the form

\[J=aL+L^{*}a+b.\]

In this expression, think of $a$ as the operator that takes the vector $\{u_n\}$ to $\{a_nu_n\}$, and $b$ as the operator that takes the vector $\{u_n\}$ to $\{b_nu_n\}$.

Recall that powers of $J$ are symmetric operators. Let us therefore note the following recursive expression for $(J^t)_1$: 

\begin{align}
\nonumber (J^t)_1=& \left<\delta_1,J^t\delta_1\right>\\
\nonumber=& \left<\delta_1,J^{t-1}J\delta_1\right>\\
\nonumber=& \left<\delta_1, J^{t-1}a_1\delta_2\right>+\left<\delta_1, J^{t-1}a_0\delta_0\right>+\left<\delta_1,J^{t-1}b_1 \delta_1\right>\\
\nonumber=& a_1\left<\delta_1, J^{t-1}L^*\delta_1\right>+a_0\left<L^*\delta_0, J^{t-1}\delta_0\right>+b_1\left<\delta_1,J^{t-1} \delta_1\right>\\
=&a_1(LJ^{t-1})_1+a_0(LJ^{t-1})_0+b_1 (J^{t-1})_1\label{<J>}
\end{align}

Similarly, we can calculate two recursive expressions for $(LJ^t)_1$:

\begin{align}
\nonumber(LJ^t)_1=& \left<\delta_1,J^t\delta_{2}\right>\\
\nonumber=& \left<\delta_1,JJ^{t-1}\delta_2\right>\\
\nonumber=& \left<\delta_1,aLJ^{t-1}\delta_2\right>+\left<\delta_1,L^*aJ^{t-1}\delta_2\right>
+\left<\delta_1,bJ^{t-1}\delta_2\right>\\
\nonumber=& \left<a_0\delta_0,J^{t-1}\delta_2\right>+\left<a_1\delta_2,J^{t-1}\delta_2\right>
+b_1\left<\delta_1,J^{t-1}\delta_2\right>\\
=&a_0 (L^2J^{t-1})_0+a_1(J^{t-1})_2+b_1(LJ^{t-1})_1\label{<SJ>1},
\end{align}
\begin{align}
\nonumber(LJ^t)_1=&\left<\delta_1,J^t\delta_2\right>\\
\nonumber=&\left<\delta_1,J^{t-1}J\delta_2\right>\\
\nonumber=& \left<\delta_1, J^{t-1}a_2\delta_{3}\right>+\left<\delta_1, J^{t-1}a_1\delta_1\right>+\left<\delta_1,J^{t-1}b_2 \delta_2\right>\\
=& a_2 (L^2J^{t-1})_1+(J^{t-1})_1a_1+(LJ^{t-1})_1b_2.\label{<SJ>2}
\end{align}

\begin{prop}\label{mainprop}
The $p_j, q_j,$ are independent of shifts on $J$ for all $j=1,\ldots d$. Furthermore, $p_j=q_j$ for all $j=1,\ldots d$.
\end{prop}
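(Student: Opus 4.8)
The plan is to prove both assertions at once by downward induction on $j$, running from $j=d$ down to $j=1$, with inductive hypothesis that for every $k>j$ the coefficients $p_k$ and $q_k$ are independent of shifts on $J$ and satisfy $p_k=q_k$. The base case $j=d$ is immediate: by \eqref{A=A=C} we already know that $A_d=C_{d-1}/2$ is shift-independent, so $p_d=C_{d-1}/2$ is shift-independent, and the definition \eqref{q_d} gives $q_d=-A_d+C_{d-1}=C_{d-1}/2=p_d$. The same equation \eqref{A=A=C} supplies the first nontrivial input, namely that $A_{d-1}$ is also shift-independent, which one reads off from the $z^{d}$ coefficient relation of \eqref{MasterEq3}.

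For the shift-independence of $p_j$ in the inductive step, I would subtract the recursion \eqref{p_d} written at $SJ$ from the same recursion written at $J$. Using the inductive hypothesis that the higher $p_k$ are shift-independent, this collapses to the single identity
\[
C_{j-1}(SJ)-C_{j-1}(J)=2\sum_{k>j}p_k\big[\big((SJ)^{k-j}\big)_1-(J^{k-j})_1\big].
\]
The left-hand side is not zero once $j\le d-1$; I would compute it from \eqref{A+A=C-bC}, which expresses $C_{j-1}$ through $A_j(SJ)+A_j(J)+b_1C_j$ and thereby reduces the question to the shift-differences of the $A_j$. Those shift-differences are in turn pinned down, top to bottom, by the $z^{d},\dots,z^{2}$ coefficient relations of \eqref{MasterEq3} together with \eqref{D-aC}. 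The right-hand side is handled by converting every shifted matrix element back into unshifted ones through repeated application of the recursions \eqref{<J>}, \eqref{<SJ>1} and \eqref{<SJ>2}. The content of the step is that these two sides match term by term; the case $j=d-1$, where the shift-difference of $C_{d-2}$ is $(b_2-b_1)C_{d-1}$ and $p_d=C_{d-1}/2$, already displays the cancellation in miniature.

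For the equality $p_j=q_j$, I would first simplify the recursion \eqref{q_d}: substituting \eqref{A+A=C-bC} turns $-A_j+C_{j-1}-b_1C_j$ into $A_j(SJ)$ and yields the compact form
\[
q_j=A_j(SJ)-2a_1\sum_{k>j+1}(LJ^{k-j-1})_1\,q_k.
\]
Invoking the inductive hypothesis $p_k=q_k$ for $k>j$, the desired equality becomes an identity relating $C_{j-1}/2-A_j(SJ)$ to a fixed linear combination of the matrix elements $(J^{k-j})_1$ and $(LJ^{k-j-1})_1$ weighted by the now common coefficients $p_k$. I would verify this identity by once more feeding in the recursions \eqref{<J>}, \eqref{<SJ>1} and \eqref{<SJ>2}; once $p_j=q_j$ is known, shift-independence of $q_j$ follows from that of $p_j$. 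Throughout, the construction identities $C(z,J)=2G_1$ from \eqref{C=2Gequation} and the expansion \eqref{A=G-Hequation} serve as the dictionary that lets me pass freely between statements about the coefficients $A_j,C_j,D_j$ and statements about $G_1$ and $H_1$.

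The step I expect to be the main obstacle is precisely this bookkeeping. Because $A_j$, $C_j$ and $D_j$ are themselves shift-dependent as soon as $j\le d-2$ --- for instance one finds $A_{d-2}(SJ)-A_{d-2}(J)=(a_1^2-a_0^2)C_{d-1}$ --- no term-by-term argument is available, and the entire force of the proposition lies in showing that the particular linear combinations singled out by \eqref{p_d} and \eqref{q_d} annihilate these shift-differences. The delicate point is to keep the single recursion \eqref{<J>} for $(J^t)_1$ compatible with the two distinct recursions \eqref{<SJ>1} and \eqref{<SJ>2} for $(LJ^t)_1$, and to arrange the resulting telescoping so that every lower-order remainder is absorbed. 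The auxiliary relations coming from the $z^{1}$ and constant coefficients of \eqref{MasterEq3}, which carry the derivative terms $a_1'/a_1$ and $b_1'$, supply the additional input needed to control the shift-differences at the bottom of the recursion.
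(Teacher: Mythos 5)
Your induction skeleton matches the paper's (downward induction from $j=d$, the same inductive hypothesis, the same base case via \eqref{A=A=C}), and your auxiliary computations are correct: the compact form $q_j=A_j(SJ)-2a_1\sum_{k>j+1}(LJ^{k-j-1})_1q_k$ does follow from \eqref{q_d} and \eqref{A+A=C-bC}, and your spot checks $C_{d-2}(SJ)-C_{d-2}(J)=(b_2-b_1)C_{d-1}$ and $A_{d-2}(SJ)-A_{d-2}(J)=(a_1^2-a_0^2)C_{d-1}$ are right. The genuine gap is that both of your reduction steps end with ``I would verify this identity by once more feeding in the recursions,'' and you yourself flag that bookkeeping as the main obstacle. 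That bookkeeping is not a loose end: it \emph{is} the proposition. The paper's proof consists almost entirely of performing exactly those cancellations, including the separate boundary cases (the terms with $t=k+1$ and $t=k+2$ cancel by a different pattern than the terms with $t\geq k+3$), which your outline never engages. As written, your argument reduces the proposition to two unproved identities and stops, so it is a plan for a proof rather than a proof.

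Moreover, your reorganization makes the deferred computation strictly heavier than the one the paper carries out. By insisting on proving shift-independence of $p_j$ \emph{before} $p_j=q_j$, you must eliminate the shift-differences of $A_j$ by unwinding the whole chain of $z^d,\dots,z^2$ relations of \eqref{MasterEq3}, giving $A_j(SJ)-A_j(J)=\sum_{m>j}b_1^{m-j-1}\left[a_1^2C_m(SJ)-a_0^2C_m(S^{*}J)\right]$, and for $C_{j-1}(SJ)-C_{j-1}(J)$ you need the same chain again at $SJ$, which drags in $C_m(S^2J)$, i.e.\ matrix elements $(J^r)_3$, together with mixed powers of $b_1$ and $b_2$. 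The paper avoids all of this by working locally: at step $k$ it uses only the single $z^k$ coefficient of \eqref{MasterEq2} and the single $z^{k+1}$ coefficient of \eqref{MasterEq3} (combined with \eqref{D-aC}), shows that every $p_t$ term with $t>k$ cancels via \eqref{<J>}, \eqref{<SJ>1}, \eqref{<SJ>2}, and is left with the two linear relations $2p_k(SJ)-q_k(SJ)-q_k(J)=0$ and $2p_k(SJ)-q_k(SJ)-2p_k(J)+q_k(J)=0$; subtracting gives $p_k(J)=q_k(J)$, and substituting back gives shift-independence. Neither conclusion is (or needs to be) established before the other. One further misconception: your last sentence claims the $z^1$ and constant coefficient relations of \eqref{MasterEq3}, which carry $a_1'$ and $b_1'$, are needed at the bottom of the recursion. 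They are not: even for $k=1$ the argument only uses \eqref{A+A=C-bC} with $j=1$ and the $z^2$ coefficient of \eqref{MasterEq3}; the derivative-carrying relations enter only later, in Proposition \ref{prop:unfinished}. If you pursue your two-step variant you must actually exhibit the (heavier) cancellations; the cleaner fix is to keep your induction but replace your Steps 1 and 2 by the extraction and simultaneous solution of the two linear equations above.
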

\begin{proof}
It is easy to check that this is true for $j=d$ by \eqref{A=A=C}, \eqref{p_d}, \eqref{q_d} and the fact that $A_d$ is shift-independent. We will proceed by induction.

As an inductive hypothesis, assume that the proposition holds true for $j=d,d-1,\ldots, k+1$. We will now prove it for $j=k$.

First note that since we are trying to prove the proposition for $j=1,\ldots d$, we may assume that $k\geq 1$. Thus \eqref{A+A=C-bC} implies

\begin{equation}
A_{k}(SJ)+A_{k}(J)=C_{k-1}(J)-b_1C_k(J),\label{A+A=C-bC,j=k}
\end{equation}

As a notational convention for the rest of this proof, we will write $p_k(J)$,$p_k(SJ)$, $q_k(J)$, $q_k(SJ)$ to emphasize that $p_k$, $q_k$ are possibly dependent on shifts of $J$, and for $j>k$ we will write $p_j$ and $q_j$ (leaving the $J$-dependence implicit) to emphasize that we know for sure that the $p_j$ and $q_j$ in question are independent of shifts (by the inductive hypothesis). Writing $A_{k}(J)$ down in terms of $p_j$s and $q_j$s, and recalling that for $j>k$ we know $p_j=q_j$, we have by  \eqref{q_d}, \eqref{C=2Gequation}, \eqref{G_npolynomial}, \eqref{H_npolynomial},

\begin{align*}
A_k(J)=&+2((J^{d-k})_{1}p_d+(J^{d-k-1})_1p_{d-1}+\ldots+ (J)_1 p_{k+1}+ p_k(J))\\
&-2b_1((J^{d-k-1})_1p_d+(J^{d-k-2})_1p_{d-1}+\ldots+ (J)_1 p_{k+2}+ p_{k+1})\\
&-q_k(J)-2a_1((LJ)_1p_{k+2}+(LJ^2)_1p_{k+3}
+\ldots+(LJ^{d-k-1})_1p_{d}),
\end{align*}
and
\begin{align*}
C_k(J)=& 2((J^{d-k-1})_1p_d+(J^{d-k-2})_1p_{d-1}+\ldots+ (J)_1 p_{k+2}+ p_{k+1}),\\
C_{k-1}(J)=& 2((J^{d-k})_{1}p_d+(J^{d-k-1})_1p_{d-1}+\ldots+ (J)_1 p_{k+1}+ p_k(J)).
\end{align*}

Let us plug these into \eqref{A+A=C-bC,j=k} and consider the terms in that equation that are a multiple of $p_t$, for some $t$ in $[k+2,d]$. These terms are

\begin{align*}
&2(J^{t-k})_2p_t-2b_2 (J^{t-k-1})_2 p_t-2a_2(LJ^{t-k-1})_2p_{t}\\
&+2(J^{t-k})_1p_t-2b_1 (J^{t-k-1})_1 p_t-2a_1(LJ^{t-k-1})_1p_{t}\\
-&2(J^{t-k})_1p_t+2b_1(J^{t-k-1})_1 p_t
\end{align*}

We can simplify this expression by making the obvious cancellations:

\begin{align*}
&2(J^{t-k})_2p_t-2b_2 (J^{t-k-1})_2p_t-2a_2(LJ^{t-k-1})_2p_{t}\\
&-2a_1(LJ^{t-k-1})_1p_{t}.
\end{align*}
But this is zero, by \eqref{<J>}.

For the case $t=k+1$, we get instead
 
 \begin{align*}
&2(J^{t-k})_2p_t-2b_2 (J^{t-k-1})_2 p_t\\
&+2(J^{t-k})_1p_t-2b_1 (J^{t-k-1})_1 p_t\\
-&2(J^{t-k})_1p_t+2b_1(J^{t-k-1})_1 p_t,
\end{align*}
which reduces to
\begin{align*}
&2(J^{t-k})_2p_t-2b_2 (J^{t-k-1})_2p_t.
\end{align*}
This is equal to $2b_2p_t-2b_2p_t=0$. 

Thus in \eqref{A+A=C-bC,j=k} we may ignore all the terms that are multiples of $p_t$, for any $t$ in $[k+1,d]$. This leaves us with

\[2p_k(SJ)-q_k(SJ)+2p_k(J)-q_k(J)=2p_k(J),\]
or equivalently 

\begin{equation}2p_k(SJ)-q_k(SJ)-q_k(J)=0.
\label{pq.equation1}
\end{equation}

Now we observe that \eqref{eq3^j=2..d} and \eqref{D-aC} imply that

\begin{equation}\label{eq3^j=k}
-b_1(A_{k+1}(SJ)-A_{k+1}(J))+(A_k(SJ)-A_k(J))=a_1^2C_{k+1}(SJ)-a_0^2C_{k+1}(S^{*}J),
\end{equation}
where $S^*J$ is shorthand for the operator $LJL^*$.

We can write this equation in terms of $p_j$s and $q_j$s. Let us move all the terms in \eqref{eq3^j=k} to the left and consider again only  the $p_t$ terms, for some $t$ in $[k+3,d]$. We end up with

\begin{align}
\nonumber &-b_1(2(J^{t-k-1})_2p_t-2b_2 (J^{t-k-2})_2p_t-2a_2(LJ^{t-k-2})_2p_{t})\\
\nonumber &+b_1(2(J^{t-k-1})_1p_t-2b_1 (J^{t-k-2})_1p_t-2a_1(LJ^{t-k-2})_1p_{t})\\
\nonumber &+2(J^{t-k})_2p_t-2b_2 (J^{t-k-1})_2p_t-2a_2(LJ^{t-k-1})_2p_{t}\\
\nonumber &-2(J^{t-k})_1p_t+2b_1 (J^{t-k-1})_1p_t+2a_1(LJ^{t-k-1})_1p_{t}\\
\nonumber &-2a_1^2(J^{t-k-2})_2p_t+2a_0^2(J^{t-k-2})_0p_t.\\
\label{longequation}
\end{align}

This seems like a formidable expression, but again we can simplify it. First, we apply \eqref{<J>} to the first term of each of the first four lines to get

\begin{align*}
&-b_1(2a_1(LJ^{t-k-2})_1p_t\\
&+b_1(2a_0(LJ^{t-k-2})_0p_t\\
&+2a_1(LJ^{t-k-1})_1p_t\\
&-2a_0(LJ^{t-k-1})_0p_t\\
&-2a_1^2(J^{t-k-2})_2p_t+2a_0^2(J^{t-k-2})_0p_t.
\end{align*}
Finally, applying \eqref{<SJ>1} to the $(LJ^{t-k-1})_1$ term, and \eqref{<SJ>2} to the $(LJ^{t-k-1})_0$ term, we find that everything cancels. Thus all the $p_t$ terms in \eqref{eq3^j=k} cancel out for $t=k+3,\ldots d$.

Let us consider $t=k+2$. Then instead of \eqref{longequation} we have

\begin{align}
\nonumber &-b_1(2(J^{t-k-1})_2p_t-2b_2 (J^{t-k-2})_2p_t)\\
\nonumber &+b_1(2(J^{t-k-1})_1p_t-2b_1 (J^{t-k-2})_1p_t)\\
\nonumber &+2(J^{t-k})_2p_t-2b_2 (J^{t-k-1})_2p_t-2a_2(LJ^{t-k-1})_2p_{t}\\
\nonumber &-2(J^{t-k})_1p_t+2b_1 (J^{t-k-1})_1p_t+2a_1(LJ^{t-k-1})_1p_{t}\\
\nonumber &-2a_1^2(J^{t-k-2})_2p_t+2a_0^2(J^{t-k-2})_0p_t.
\end{align}

Replacing all the $t$'s in this expression with $k+2$, we get

\begin{align}
\nonumber &+2(J^{2})_2p_t-2b_2 (J)_2p_t-2a_2(LJ)_2p_{t}\\
\nonumber &-2(J^{2})_1p_t+2b_1 (J)_1p_t+2a_1(LJ)_1p_{t}\\
\nonumber &-2a_1^2 p_t+2a_0^2p_t,
\end{align}
and this reduces to $0$ once we apply \eqref{<J>} to the $(J^{2})_2$ and $(J^{2})_1$ terms.

Finally, we consider the case $t=k+1$. Now instead of \eqref{longequation} we have

\begin{align}
\nonumber &-2b_1(J^{t-k-1})_2p_t\\
\nonumber &+2b_1(J^{t-k-1})_1p_t\\
\nonumber &+2(J^{t-k})_2p_t-2b_2 (J^{t-k-1})_2p_t\\
\nonumber &-2(J^{t-k})_1p_t+2b_1 (J^{t-k-1})_1p_t,
\end{align}
Replacing all the $t$'s with $k+1$ we get
\begin{align}
\nonumber &-2b_1p_t\\
\nonumber &+2b_1p_t\\
\nonumber &+2b_2p_t-2b_2 p_t\\
\nonumber &-2b_1p_t+2b_1 p_t,
\end{align}
which is clearly also zero. 

Thus all the $p_t$ terms in \eqref{eq3^j=k} cancel out for $t=k+1,\ldots d$. This means that \eqref{eq3^j=k} reduces to
\[2p_k(SJ)-q_k(SJ)-2p_k(J)+q_k(J)=0.\]
Together with \eqref{pq.equation1} this implies that $p_k(J)=q_k(J)$ and that $p_k,q_k$ are both independent of shifts on $J$, thus concluding our induction proof.
\end{proof}

Finally, one last bit of unfinished business. We have to clean up the extraneous terms in \eqref{A=G-Hequation}.
\begin{prop}\label{prop:unfinished}
\begin{align*}&A_0(J)+b_1C_0(J)
+2a_1((LJ)_1q_2+(LJ^2)_1q_3+\ldots +(LJ^{d-1})_1q_{d})\\
&-(J)_1q_1-(J^2)_1q_2-\ldots-(J^d)_1q_d=0
\end{align*}
\end{prop}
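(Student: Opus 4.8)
The plan is to show that the quantity in Proposition \ref{prop:unfinished}, which is precisely the constant-in-$z$ discrepancy $E$ recorded in \eqref{A=G-Hequation}, vanishes by exploiting exactly those coefficient equations that were not consumed in the proof of Proposition \ref{mainprop}. Write $E=A_0(J)+F(J)$, where $F(J)=b_1C_0(J)+2a_1\sum_{j=2}^{d}(LJ^{j-1})_1q_j-\sum_{j=1}^{d}(J^j)_1q_j$, so the goal is the purely algebraic identity $A_0(J)=-F(J)$. By Proposition \ref{mainprop} the $q_j$ are shift-independent and $p_j=q_j$, and \eqref{A=G-Hequation} already fixes every $A_j(J)$ with $j\ge 1$ independently of $E$, since $E$ contributes only to the constant term. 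The relations still available are the constant-coefficient equation \eqref{A+A=C-bC,j=0} coming from \eqref{MasterEq2}, together with the two lowest-order equations \eqref{eq3^j=1} and \eqref{eq3^j=0} coming from \eqref{MasterEq3}.

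First I would eliminate the time derivative $a_1'$. Multiplying \eqref{eq3^j=1} through by $a_1$ gives $A_0(SJ)-A_0(J)=b_1(A_1(SJ)-A_1(J))+a_1^2C_1(SJ)+D_1(J)-a_1'/a_1$, while \eqref{A+A=C-bC,j=0} reads $A_0(SJ)+A_0(J)=-a_1'/a_1-b_1C_0(J)$. Subtracting the former from the latter cancels $a_1'/a_1$ and produces the time-derivative-free formula $2A_0(J)=-b_1C_0(J)-b_1(A_1(SJ)-A_1(J))-a_1^2C_1(SJ)-D_1(J)$. This pins $A_0(J)$ down algebraically, and the claim reduces to verifying that the right-hand side equals $-2F(J)$.

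The remaining step is to substitute the explicit low-order coefficients and collapse the result with the moment recursions. Using \eqref{C=2Gequation} and \eqref{G_npolynomial} one reads off $C_0(J)=2\sum_{k=1}^{d}q_k(J^{k-1})_1$ and $C_1(SJ)=2\sum_{k=2}^{d}q_k(J^{k-2})_2$; \eqref{D-aC} gives $D_1(J)=-2a_0^2\sum_{k=2}^{d}q_k(J^{k-2})_0$; and \eqref{A=G-Hequation} together with \eqref{G_npolynomial} and \eqref{H_npolynomial} supplies $A_1(J)$ and its shift $A_1(SJ)$ in terms of the shift-independent $q_k$ and the moments $(J^i)_\bullet$ and $(LJ^i)_\bullet$ at the three adjacent sites $0,1,2$. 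Plugging these into the formula for $2A_0(J)$ and comparing with $-2F(J)$ leaves a finite identity among moments at sites $0,1,2$, which I expect to reduce to $0$ upon repeated application of \eqref{<J>}, \eqref{<SJ>1} and \eqref{<SJ>2}, the same three recursions already used in Proposition \ref{mainprop}.

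I expect the main obstacle to be twofold. Conceptually, the decisive move is the cancellation of $a_1'$ between \eqref{A+A=C-bC,j=0} and \eqref{eq3^j=1}: it is what makes a time-derivative-free identity possible at all, and it explains why \eqref{eq3^j=0} (which instead governs the evolution of $b_1$) is not needed for this proposition. Computationally, the labor lies in matching the site-$0$, site-$1$ and site-$2$ moment sums arising from $D_1(J)$, $C_1(SJ)$ and $A_1(SJ)$ against those in $F(J)$; this telescoping is routine but bulky, and mirrors term-by-term the cancellations carried out in the proof of Proposition \ref{mainprop}.
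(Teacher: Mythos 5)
Your proposal is correct and follows essentially the same route as the paper: the paper likewise eliminates $a_1'$ by subtracting $a_1$ times \eqref{eq3^j=1} from \eqref{A+A=C-bC,j=0}, obtaining its equation \eqref{A0equation} (identical to your time-derivative-free formula once \eqref{D-aC} rewrites $D_1(J)$ as $-a_0^2C_1(S^{*}J)$), and then verifies the resulting moment identity term by term in each $q_t$ using \eqref{<J>}, \eqref{<SJ>1} and \eqref{<SJ>2}. The only difference is that the paper actually executes the telescoping you defer, including the separate easy check of the $q_1$ term, whose coefficient is just $-b_1q_1$ since $C_1(SJ)$, $C_1(S^{*}J)$ and $A_1(SJ)-A_1(J)$ contain no $q_1$ terms.
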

\begin{proof}
First, we note that by \eqref{C=2Gequation},\eqref{G_npolynomial} and Proposition \ref{mainprop},
\[C_0(J)=2(J^{d-1})_1q_d+2(J^{d-2})_1q_{d-1}+\ldots+2q_1.\]
 So it suffices to prove

\begin{align}\nonumber A_0(J)=& -2a_1((LJ)_1q_2+(LJ^2)_1q_3+\ldots +(LJ^{d-1})_1q_{d})\\
&+q_1((J)_1-2b_1)+q_2((J^2)_1-2b_1(J)_1)+\ldots+q_d((J^d)_1-2b_1(J^{d-1})_1).\label{A0goal}
\end{align}

Subtracting $a_1$ times \eqref{eq3^j=1} from
\eqref{A+A=C-bC,j=0} and using \eqref{D-aC} we obtain

\begin{equation}\label{A0equation}
2A_0(J)=-b_1C_0(J)-a_1^2C_1(SJ)+a_0^2C_1(S^{*}J)-b_1(A_1(SJ)-A_1(J))
\end{equation}

We already know how to write all the terms in the RHS in terms of the $q_j$. Let us concentrate on the terms which are a multiple of $q_t$, for some $t\in[2,\ldots,d]$. We extract the $q_t$ terms from all the expressions on the RHS 

\begin{align*}
-b_1C_0(J)\sim& -2b_1(J^{t-1})_1q_t\\
-a_1^2C_1(SJ)\sim& -2a_1^2(J^{t-2})_2q_t\\
a_0^2C_1(S^{*}J)\sim & 2a_0^2(J^{t-2})_0q_t\\
-b_1A_1(SJ)\sim& -b_1
(2(J^{t-1})_2-2b_2(J^{t-2})_2-2a_2(LJ^{t-2})_2)q_t\\
b_1A_1(J)\sim& b_1(2(J^{t-1})_1-2b_1(J^{t-2})_{1}-2a_1(LJ^{t-2})_1)q_t\\
\end{align*}

So when we multiply \eqref{A0equation} by $\frac{1}{2}$, that equation asserts that the $q_t$-term of $A_0(J)$ is

\begin{align*}
A_0(J)\sim & -b_1(J^{t-1})_1q_t
 -a_1^2(J^{t-2})_2q_t+
 a_0^2(J^{t-2})_0q_t\\
 &-b_1((J^{t-1})_2-b_2(J^{t-2})_2-a_2(LJ^{t-2})_2)q_t\\
 &+ b_1((J^{t-1})_1-b_1(J^{t-2})_{1}-a_1(LJ^{t-2})_1)q_t\\
\end{align*}

We may apply \eqref{<J>} to the $b_1(J^{t-1})_2$ and $b_1(J^{t-1})_1$ terms to simplify this expression:

\begin{align*}
A_0(J)\sim & -b_1(J^{t-1})_1q_t
 -a_1^2(J^{t-2})_2q_t+
 a_0^2(J^{t-2})_0q_t\\
 &-b_1a_1(LJ^{t-2})_1 q_t\\
 &+ b_1a_0(LJ^{t-2})_0q_t\\
\end{align*}

We may rewrite this as

\begin{align*}
A_0(J)\sim & -b_1(J^{t-1})_1 q_t
 -a_1 q_t(a_1(J^{t-2})_2+b_1(LJ^{t-2})_1)\\
& +a_0q_t(a_0(J^{t-2})_0+b_1(LJ)^{t-2}_0)\\
= & -b_1(J^{t-1})_1 q_t
 -a_1 q_t(a_1(J^{t-2})_2+b_1(LJ^{t-2})_1+a_0(L^2J^{t-1})_0)\\
& +a_0q_t(a_0(J^{t-2})_0+b_1(LJ^{t-2})_0+a_1(L^2J^{t-2})_0)\\
\end{align*}
We may then apply \eqref{<SJ>1} and \eqref{<SJ>2} to get

\begin{align*}
A_0(J)\sim & -b_1J^{t-1}_1 q_t
 -a_1(LJ^{t-1})_1q_t+a_0(LJ^{t-1})_0q_t\\
 =&a_1(LJ^{t-1})_1q_t+b_1(J^{t-1})_1q_t+a_0(LJ^{t-1})_0q_t-2b_1(J^{t-1})_1 q_t-2a_1(LJ^{t-1})_1q_t\\
\end{align*}

Applying \eqref{<J>} once more we get

\begin{align*}
A_0(J)\sim &(J^t)_1q_t-2b_1(J^{t-1})_1q_t-2a_1(LJ^{t-1})_1q_t.
\end{align*}

This matches the $q_t$ part of the right hand side of the equation \eqref{A0goal}, so we have proven that the $q_t$ terms in \eqref{A0goal} are correct for $t=2,\ldots, d$. All that remains is to show that the $q_1$ term is correct. Observe that \eqref{A0goal} asserts that the $q_1$ term of $A_0(J)$ is just $-q_1b_1$. It is easy to check that this follows from \eqref{A0equation} since the $q_1$ term of $-b_1C_0(J)$ is $-2b_1q_1$, and the expressions $C_1(SJ), C_1(S^{*}(J)$ and $(A_1(SJ)-A_1(J))$ do not have $q_1$ terms.
\end{proof}

\begin{prop}\label{J-independence}
If the entries of the $B(J)$ matrix are polynomials in $z$ that are continuous with respect to $J$, then $p_1,\ldots, p_d$ are independent of $J$ (as opposed to only being independent of shifts on $J$)
\end{prop}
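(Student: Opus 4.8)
The plan is to upgrade the shift-invariance from Proposition \ref{mainprop} to full $J$-independence by a downward induction on $j$, with the continuity hypothesis (so far unused) supplying exactly the extra leverage the induction needs. The base case is immediate in spirit: by \eqref{p_d} and \eqref{A=A=C} we have $p_d(J)=C_{d-1}(J)/2=A_d(J)$, already known to be shift-invariant, so the whole content of the base case is to show that this shift-invariant, continuous function is constant. For the inductive step I would assume $p_{j+1},\dots,p_d$ are constants and read off from \eqref{p_d} that $p_j(J)=C_{j-1}(J)/2-\sum_{i>j}p_i\,(J^{\,i-j})_1$; since the moments $(J^{\,i-j})_1$ and the higher constants $p_i$ are then understood, proving $p_j$ constant again reduces to showing that a single shift-invariant, continuous scalar function of $J$ is in fact constant.

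So the real task is the following one-step principle: a function $f:\mathcal J\to\mathbb C$ that is continuous in the operator norm and satisfies $f(SJ)=f(J)$ must be constant. The approach I would take is to show that $f$ cannot depend on any individual Jacobi parameter, and then invoke connectedness of $\mathcal J$. Given a parameter $a_n$ or $b_n$ sitting far from site $1$, I would use $f(J)=f(S^mJ)$ to relocate a prescribed local window to a neighborhood of site $1$, and then compare two operators that differ only inside that window by building, inside $\mathcal J$, a one-parameter family interpolating between the two prescribed windows while keeping the operator norm at most $2$. Since $f$ is unchanged under the shifts that bring either window to site $1$, matching the two endpoints should force $f$ to be insensitive to the window data, i.e. locally constant; connectedness of $\mathcal J$ then yields global constancy, and hence constancy of every $p_j$.

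The hard part, and the reason the continuity hypothesis must be handled with care, is that operator-norm continuity does \emph{not} make the shifts $S^mJ$ converge: shifting a Jacobi operator whose parameters are not asymptotically constant leaves successive shifts a fixed, order-one distance apart, so one cannot simply pass to a limit of shifts and apply continuity. Indeed, a naive ``shift-invariant plus continuous implies constant'' assertion is false in isolation, as $f(J)=\limsup_n b_n$ shows, so the argument cannot rest on shift-invariance and continuity alone. The delicate step is therefore to play continuity (available only for small perturbations) against shift-invariance (which transports data by order-one distances) so as to exclude precisely this kind of nonlocal dependence; I expect that making this rigorous will require keeping all comparison operators strictly inside $\mathcal J$ and using that $\odot$ is a genuine differentiable $\mathbb R$-group action, which is ultimately what forbids a coefficient such as $\limsup_n b_n$ from arising.
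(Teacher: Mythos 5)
Your proposal correctly reduces Proposition~\ref{J-independence} to a single one-step principle: a function $f:\mathcal J\to\mathbb C$ that is continuous in the operator norm and satisfies $f(SJ)=f(J)$ must be constant. But you never prove this principle; indeed you refute it (your example $f(J)=\limsup_n b_n$ is shift-invariant, $1$-Lipschitz in operator norm, and non-constant on $\mathcal J$), and the repair you sketch cannot close the hole. The window-interpolation argument can at best show that $f$ is unchanged by perturbations supported in a finite window, together with shifts of such perturbations; but $\limsup_n b_n$ has exactly this insensitivity to finitely supported changes, so window-insensitivity plus continuity plus shift-invariance provably does \emph{not} imply constancy. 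Connectedness of $\mathcal J$ does not rescue this, because window-insensitivity is not local constancy in the norm topology: a window perturbation has order-one norm, while a small-norm perturbation need not be supported in any window. Finally, the appeal to the differentiable group action as ``what forbids $\limsup_n b_n$ from arising'' is a hope rather than an argument --- nothing in your proposal actually invokes the group action or the structure of the recursion \eqref{p_d} beyond shift-invariance and continuity, and you give no mechanism by which it would enter. So the central step of the proposition remains unproven; this is a genuine gap, not a presentational one.

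For comparison, the paper's proof is a different and much shorter argument: choose a Jacobi operator $J'$ whose shift-orbit is dense in $\mathcal J$; by Proposition~\ref{mainprop} each $p_j$ is constant along this orbit, and continuity then forces $p_j$ to be constant on the orbit's closure, i.e.\ on all of $\mathcal J$. You should be aware, however, that the difficulty you isolated bears directly on this argument as well: in the operator-norm topology $\mathcal J$ is not separable (for instance, the operators with $a_n\equiv\tfrac14$ and $b_n\in\{0,\tfrac12\}$ form an uncountable family that is pairwise at distance $\tfrac12$), so no countable shift-orbit can be norm-dense, and the $J'$ with which the paper's proof begins does not exist in that topology. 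A dense shift-orbit does exist if $\mathcal J$ carries a coarser topology (say, coefficient-wise convergence), but then the continuity hypothesis on $B(J)$ must be strengthened to continuity in that coarser topology for the argument to go through. In short: your proposal is incomplete at precisely its crucial step, but the obstruction you identified ($\limsup_n b_n$, non-separability) is a substantive one that the paper's own one-line dense-orbit proof does not dispose of either.
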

\begin{proof}
We begin by choosing a Jacobi operator $J'$ whose orbit on the shift map is dense. Using \eqref{p_d} and the hypothesis that the entries of the $B(J)$ matrix are continuous in $J$, we note that starting with $J'$ and then shifting, we get that the $p_1(J),\ldots p_d(J)$ must be independent of $J$.
\end{proof}

\begin{proof}[Proof of Theorem]

We have shown (from \eqref{eq:ACD-A},\eqref{D-aC}, \eqref{C=2Gequation}, \eqref{A=G-Hequation}, and Proposition \ref{prop:unfinished}) that our $B(J)$ matrix takes the form

\begin{equation}\label{B(J)final}
B(J)=
\begin{pmatrix}
2(z-b_1)G_1(J)-H_1(J) & 2G_1(J)\\
-2a_0^2G_1(S^{*}J)& -2(z-b_1)G_1(J)+H_1(J)
\end{pmatrix}.
\end{equation}

Now let us consider the $B$-matrix we get when we derive the Toda hierarchy in the traditional way, using the Lax equation.  We can relate such a $B$ with the matrix $C_r$ defined in \cite[(12.93)]{Teschl-Jacobi} which satisfies

\[\dfrac{d}{dt}
\begin{pmatrix}
 u_0(z,t)\\
u_1(z,t)
\end{pmatrix}
=
-C_r(z,t)
\begin{pmatrix}
 u_0(z,t)\\
 u_1(z,t)
\end{pmatrix}.
\]

Note that we have from \eqref{Tm_-}

\[
\begin{pmatrix}
 u_0(z,t)\\
 u_1(z,t)
\end{pmatrix}=
\begin{pmatrix}
0&\frac{1}{a_0(t)}\\
-1&0
\end{pmatrix}
T(t,J)
\begin{pmatrix}
 -u_1(z,0)\\
 a_0(0)u_0(z,0)
\end{pmatrix},
\]
and therefore from Proposition \ref{T'=BT} we can derive an explicit relationship between $B$ and $C_r$:

\begin{equation}\label{B-Cr}
B=\begin{pmatrix}
0&0\\
0&\frac{\dfrac{d}{dt}a_0(t)}{a_0(t)}
\end{pmatrix}
-\begin{pmatrix}
0&-1\\
a_0(t)&0
\end{pmatrix}
C_r(z,t)
\begin{pmatrix}
0&\frac{1}{a_0(t)}\\
-1&0
\end{pmatrix}.
\end{equation}

Solving this equation and using \cite[(12.58) and (12.83)]{Teschl-Jacobi}, we get precisely the $B(J)$-matrix in \eqref{B(J)final} corresponding to polynomial $p_dz^d+\ldots+p_1z+1$. (The reader should note that there is a sign error in \cite[(12.83)]{Teschl-Jacobi}). Thus we obtain the Toda flow corresponding to the polynomial $c_dz^d+\ldots+c_1z+1$ by choosing $p_j=c_j$.

\end{proof}

\end{section}
\bibliographystyle{alpha}
\bibliography{../mybib}
\end{document}